\newtheorem{probdef}{Problem}
\author{Tom Baumeister\inst{1} \and Bernd Finkbeiner\inst{2} \and Hazem Torfah\inst{3}}
\institute{Saarland University, Saarland Informatics Campus\\
\email{s8tobaum@stud.uni-saarland.de}
\and
CISPA Helmholtz Center for Information Security\\
\email{finkbeiner@cispa.saarland}
\and
University of California, Berkeley, USA\\
\email{torfah@berkeley.edu}}
\begin{document}

\title{Explainable Reactive Synthesis\thanks{This work was partially supported by the Collaborative Research Center ``Foundations of Perspicuous Software Systems'' (TRR: 248, 389792660), the European Research Council (ERC) Grant OSARES (No. 683300), the DARPA Assured Autonomy program, the iCyPhy center, and by Berkeley Deep drive.}}
	
\maketitle

\begin{abstract}
Reactive synthesis transforms a specification of a reactive system, given in a temporal logic, into an implementation. The main advantage of synthesis is that it is automatic. The main disadvantage is that the implementation is usually very difficult to understand. In this paper, we present a new synthesis process that explains the synthesized implementation to the user. The process starts with a simple version of the specification and a corresponding simple implementation. Then, desired properties are added one by one, and the corresponding transformations, repairing the implementation, are explained in terms of counterexample traces. We present SAT-based algorithms for the synthesis of repairs and explanations. The algorithms are evaluated on a range of examples including benchmarks taken from the SYNTCOMP competition.

\keywords{reactive synthesis  \and temporal logic \and SAT-based synthesis}

\end{abstract}

\section{Introduction}\label{sec:Introduction}
In reactive synthesis, an implementation of a reactive system is automatically constructed from its formal specification. Synthesis allows developers to define the behavior of a system in terms of a list of its desired high-level properties, delegating the detailed implementation decisions to an automatic procedure. However, the great advantage of synthesis, that it is \emph{automatic}, can also be an obstacle, because it makes it difficult for the user to \emph{understand} why the system reacts in a particular way. This is particularly troublesome in case the user has written an incorrect specification or forgotten to include an important property.  The declarative nature of formal specifications gives the synthesis process the liberty to resolve unspecified behavior in an arbitrary way. This may result in implementations that satisfy the specification, yet still behave very differently from  the developer's expectations. 

In this paper, we propose a new synthesis process that, while still fully automatic, provides the user with an explanation for the decisions made by the synthesis algorithm. The \emph{explainable synthesis} process builds the implementation incrementally, starting with a small subset of the desired properties and then adding more properties, one at a time. In each step, the algorithm presents an implementation that satisfies the currently considered specification and explains the changes that were required from the previous implementation in order to accomodate the additional property. Such an explanation consists of a counterexample trace, which demonstrates that the previous implementation violated the property, and a transformation that minimally repairs the problem.

As an example, consider the specification of a two-client arbiter in linear-time temporal logic (LTL) shown in \Cref{fig:arbiterSpec}. The specification consists of three properties:  $\varphi_\textit{mutex},\varphi_\textit{fairness}$ and $\varphi_\textit{non-spurious}$, requiring \emph{mutual exclusion}, i.e., there is at most one grant at a time, \emph{fairness}, i.e., every request is eventually granted, and \emph{non-spuriousness}, i.e., a grant is only given upon receiving a request\footnote{The sub-formula $r_i \LTLrelease \neg g_i$  states that initially no grant is given to client $i$ as long as no request is received from this client. After that, the formula $\LTLsquare (g_i \to r_i \lor (\bigcirc (r_i \mathcal{R} \neg g_i)))$ ensures that a grant is active only if the current request is still active, otherwise, and from this point on, no grants are given as long as no new request is received.}. Let us suppose that our synthesis algorithm has already produced 
 the transition system shown in \Cref{fig:roundRobin} for the partial specification $\varphi_\textit{mutex} \wedge \varphi_\textit{fairness}$. This solution does not satisfy $\varphi_\text{non-spurious}$.
  To repair the transition system, the synthesis algorithm carries out the transformations depicted in Figures \ref{fig:transformation1} to \ref{fig:transformation5}. The transformations include a label change in the initial state and the redirection of five transitions. The last four redirections require the expansion of the transition system to two new states $t_2$ and $t_3$. The synthesis algorithm justifies the transformations with counterexamples, depicted in red in Figures \ref{fig:transformation1} to~\ref{fig:transformation4}.
  
  The algorithm justifies the first two transformations, (1) changing the label in the initial state to $\emptyset$ as depicted in \Cref{fig:transformation1} and (2) redirecting the transition  $(t_0,\emptyset,t_1)$ to $(t_0,\emptyset,t_0)$, as shown in \Cref{fig:transformation2}, by a path in the transition system that violates $\varphi_\textit{non-spurious}$, namely the path that starts with  transition $(t_0,\emptyset,t_1)$. 
Changing the label of the initial state causes, however, a violation of $\varphi_\textit{fairness}$, because no grant is given to client 0. This justifies giving access to a new state $t_2$, as shown in \Cref{fig:transformation3} and redirecting the transition with $\{r_0\}$ from $t_0$ to $t_2$.  
The third transformation, leading to \Cref{fig:transformation4}, is justified by the counterexample that, when both clients send a request at the same time, then only client 1 would be given access. 
Finally, the last two transformations, redirecting $(t_1,\{r_0\},t_0)$ to $(t_1,\{r_0\},t_3)$ and $(t_1,\{r_0,r_1\},t_0)$ to $(t_1,\{r_0,r_1\},t_3)$, are justified by the counterexample that if both clients alternate between sending a request then client 0 will not get a grant. This final transformation results in the transition system shown in \Cref{fig:transformation5}, which satisfies all three properties from \Cref{fig:arbiterSpec}. 

\begin{figure}[t!]
\centering
\begin{minipage}[c]{\textwidth}
\begin{subfigure}[c]{\textwidth}	
\begin{align*}
	\varphi_\textit{mutex} & := \hspace{0.3cm} \LTLsquare (\neg g_0 \vee \neg g_1)\\
	\varphi_\textit{fairness} &:= \bigwedge \limits_{i\in \{0,1\}}\LTLsquare(r_i \rightarrow \LTLdiamond g_i)\\
	\varphi_\textit{non-spurious} & := \bigwedge_{i \in \{0,1\}} ((r_i \mathcal{R} \neg g_i) \land \LTLsquare (g_i \to r_i \lor (\bigcirc (r_i \mathcal{R} \neg g_i))))
\end{align*}
\caption[]{Specification of a two-client arbiter}
\label{fig:arbiterSpec}
\end{subfigure}
\end{minipage}

\begin{minipage}[b]{0.33\textwidth}
	\begin{subfigure}[t]{0.9\textwidth}
	\scalebox{0.65}[0.65]{
	\begin{tikzpicture}[initial text=$$]
            	\node[state] (0) {$t_0$};
            	\path[draw,->](-.5,.5) -- (0);
            	\node (l0) [left= 0 of 0] {$\{g_0\}$};
            	\node[state] (1) [right = 3 of 0] {$t_1$};
            	\node (l1) [above = 0 of 1] {$\{g_1\}$};

            	\path[->]
            	(0) edge[bend left =10, draw = red] node[above] {\textcolor{red}{$\emptyset$}$, \{r_0\}, \{r_1\}, \{r_0, r_1\}$} (1);
             	\path[->](1) edge[bend left = 10, draw = red] node[below] {\textcolor{red}{$\emptyset$}$, \{r_0\}, \{r_1\}, \{r_0, r_1\}$} (0);
	\end{tikzpicture}
	}
	\caption{An implementation for specification $\varphi_\textit{mutex} \wedge \varphi_\textit{fairness}$}
	\label{fig:roundRobin}
	\end{subfigure}
\end{minipage}
\hspace{0.1cm}
\begin{minipage}[b]{0.33\textwidth}
	\begin{subfigure}[t]{0.9\textwidth}
	\scalebox{0.65}[0.65]{
	\begin{tikzpicture}[initial text=$$]
            	\node[state] (0) {$t_0$};
            	\path[draw,->](-.5,.5) -- (0);
            	\node (l0) [left= 0 of 0] {$\emptyset$};
            	\node[state] (1) [right = 3 of 0] {$t_1$};
            	\node (l1) [above= 0 of 1] {$\{g_1\}$};

            	\path[->]
            	(0) edge[bend left =10, draw = red] node[above] {\textcolor{red}{$\emptyset$}$, \{r_0\}, \{r_1\}, \{r_0, r_1\}$} (1)
             	(1) edge[bend left = 10, draw = red] node[below] {\textcolor{red}{$\emptyset$}$, \{r_0\}, \{r_1\}, \{r_0, r_1\}$} (0);
            \end{tikzpicture}
	}
	\caption{$\Delta_1$: Changing the label of the initial state}
	\label{fig:transformation1}
	\end{subfigure}
\end{minipage}
%
\begin{minipage}[b]{0.3\textwidth}
	\begin{subfigure}[t]{0.9\textwidth}
	\scalebox{0.65}[0.65]{
\begin{tikzpicture}[initial text=$$]
            	\node[state] (0) {$t_0$};
            	\path[draw,->](-.5,.5) -- (0);
            	\node (l0) [left= 0 of 0] {$\emptyset$};
            	\node[state] (1) [right = 3 of 0] {$t_1$};
            	\node (l1) [above= 0 of 1] {$\{g_1\}$};

            	\path[->]
            	(0) edge[loop above] node[above] {$\emptyset$} (0)
            	(0) edge[bend left =10, draw = red] node[above] {\textcolor{red}{$\{r_0\}$}$, \{r_1\}, \{r_0, r_1\}$} (1)
             	(1) edge[bend left = 10, draw = red] node[below] {$\emptyset, $\textcolor{red}{$\{r_0\}$}$, \{r_1\}, \{r_0, r_1\}$} (0);
            \end{tikzpicture}
	}
	\caption{$\Delta_2$: Redirecting the transition $(t_0,\emptyset,t_1)$ to $(t_0,\emptyset,t_0)$}
	\label{fig:transformation2}
	\end{subfigure}
\end{minipage}

\vskip 0.4cm
\begin{minipage}[b]{0.33\textwidth}
	\begin{subfigure}[t]{0.9\textwidth}
	\scalebox{0.65}[0.65]{
	\begin{tikzpicture}[initial text=$$]
            	\node[state] (0) {$t_0$};
            	\path[draw,->](-.5,.5) -- (0);
            	\node (l0) [left= 0 of 0] {$\emptyset$};
            	\node[state] (1) [right = 3 of 0] {$t_1$};
            	\node (l1) [above= 0 of 1] {$\{g_1\}$};
            	\node[state] (2) [below = 2 of 0] {$t_2$};
            	\node (l2) [below= 0 of 2] {$\{g_0\}$};

            	\path[->]
            	(0) edge[loop above] node[above] {$\emptyset$} (0)
            	(0) edge[bend left =10, draw = red] node[above] {$\{r_1\},$\textcolor{red}{$\{r_0, r_1\}$}} (1)
            	(0) edge[bend right = 15] node[left] {$\{r_0\}$} (2)

             	(1) edge[bend left = 10, draw = red] node[below] {\textcolor{red}{$\emptyset$}$,\{r_0\}, \{r_1\}, \{r_0, r_1\}$} (0)
            	(2) edge[bend right = 15] node[right] {$\emptyset, \{r_0\}$} (0)
            	(2) edge[bend right = 45] node[below, sloped] {$\{r_1\}, \{r_0, r_1\}$} (1);
    \end{tikzpicture}
	}
	\caption{$\Delta_3$: Redirecting the transition $(t_0,\{r_0\},t_1)$ to $(t_0,\{r_0\},t_2)$}
	\label{fig:transformation3}
	\end{subfigure}
\end{minipage}
\begin{minipage}[b]{0.33\textwidth}
	\begin{subfigure}[t]{0.9\textwidth}
	\scalebox{0.65}[0.65]{
		\begin{tikzpicture}[initial text=$$]
            	\node[ state] (0) {$t_0$};
            	\path[draw,->](-.5,.5) -- (0);
            	\node (l0) [left= 0 of 0] {$\emptyset$};
            	\node[state] (1) [right = 3 of 0] {$t_1$};
            	\node (l1) [above= 0 of 1] {$\{g_1\}$};
            	\node[state] (2) [below = 2 of 0] {$t_2$};
            	\node (l2) [below= 0 of 2] {$\{g_0\}$};
				\node[state] (3) [right = 3 of 2] {$t_3$};
            	\node (l3) [below= 0 of 3] {$\{g_0\}$};

            	\path[->]
            	(0) edge[loop above] node[above] {$\emptyset$} (0)
            	(0) edge[bend left =10, draw = red] node[above] {\textcolor{red}{$\{r_1\}$}} (1)
            	(0) edge[bend right = 15] node[left] {$\{r_0\}$} (2)
				(0) edge node[above, sloped, pos = 0.5] {$\{r_0, r_1\}$} (3)
             	(1) edge[bend left = 10, draw = red] node[below, pos = 0.39] {$\emptyset,$\textcolor{red}{$\{r_0\}$}$, \{r_1\}, \{r_0, r_1\}$} (0)
            	(2) edge[bend right = 15] node[right] {$\emptyset, \{r_0\}$} (0)
            	(2) edge[bend right = 45] node[below, sloped, pos = 0.3] {$\{r_1\}, \{r_0, r_1\}$} (1)
            	(3) edge[bend right = 15] node[below =.1, sloped] {\makecell[c]{$\emptyset, \{r_0\}$, $\{r_1\}, \{r_0, r_1\}$}} (1);
            \end{tikzpicture}
	}
	\caption{$\Delta_4$: Redirecting the transition $(t_0,\{r_0,r_1\},t_1)$ to $(t_0,\{r_0,r_1\},t_3)$}
	\label{fig:transformation4}
	\end{subfigure}
\end{minipage}
\begin{minipage}[b]{0.3\textwidth}
	\begin{subfigure}[t]{\textwidth}
	\scalebox{0.65}[0.65]{
		\begin{tikzpicture}[initial text=$$]
            	\node[state] (0) {$t_0$};
            	\path[draw,->](-.5,.5) -- (0);
            	\node (l0) [left= 0 of 0] {$\emptyset$};
            	\node[state] (1) [right = 3 of 0] {$t_1$};
            	\node (l1) [above= 0 of 1] {$\{g_1\}$};
            	\node[state] (2) [below = 2 of 0] {$t_2$};
            	\node (l2) [below= 0 of 2] {$\{g_0\}$};
				\node[state] (3) [right = 3 of 2] {$t_3$};
            	\node (l3) [below= 0 of 3] {$\{g_0\}$};

            	\path[->]
            	(0) edge[loop above] node[above] {$\emptyset$} (0)
            	(0) edge[bend left =10] node[above] {$\{r_1\}$} (1)
            	(0) edge[bend right = 15] node[left] {$\{r_0\}$} (2)
            	(0) edge node[above, sloped, pos = 0.3] {$\{r_0, r_1\}$} (3)
             	(1) edge[bend left = 10] node[below] {$\emptyset, \{r_1\}$} (0)
            	(1) edge node[below, sloped, pos = 0.23] {$\{r_0\}, \{r_0, r_1\}$} (2)
            	(2) edge[bend right = 15] node[right] {$\emptyset, \{r_0\}$} (0)
            	(2) edge[bend right = 45] node[below, sloped, pos = 0.3] {$\{r_1\}, \{r_0, r_1\}$} (1)
            	(3) edge[bend right = 15] node[below=0.1, sloped] {\makecell[c]{$\emptyset, \{r_0\}$,$\{r_1\}, \{r_0, r_1\}$}} (1);
            \end{tikzpicture}
	}
	\caption{$\Delta_5$: Redirecting the transitions $(t_1,v,t_0)$ to $(t_1,v,t_3)$ for $v \in \{\{r_0, r_1\},\{r_0\}\}$}
	\label{fig:transformation5}
	\end{subfigure}
\end{minipage}
\caption{Using explainable reactive synthesis to synthesize an implementation for a two-client arbiter. Clients request access to the shared resource via the signals $r_0$ and $r_1$. Requests are granted by the arbiter via the signals $g_0$ and $g_1$.}	
\label{fig:MotivatingExample}
\end{figure}
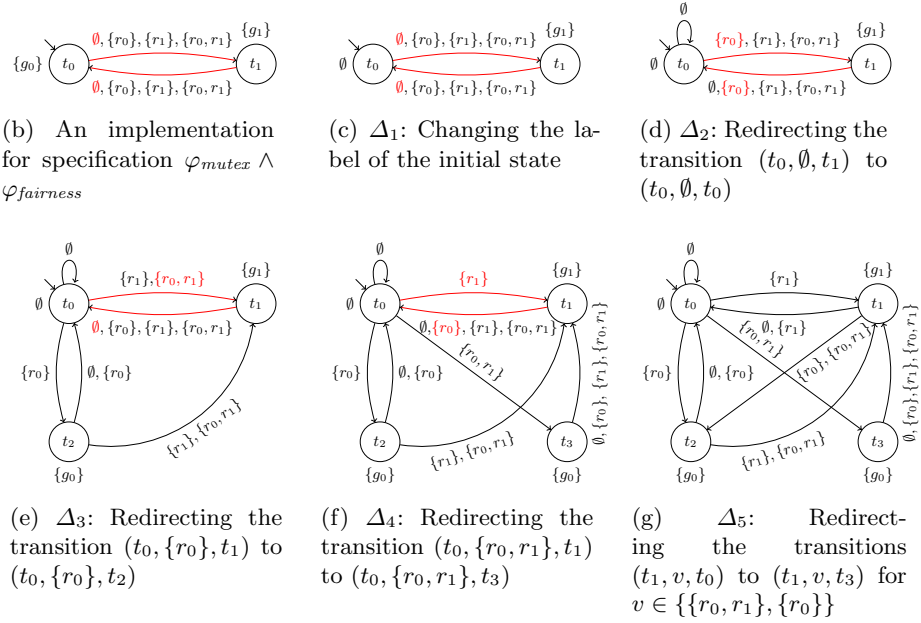

We implement the explainable synthesis approach in the setting of \emph{bounded synthesis}~\cite{BoundedSynthesis,encodingsOfBoundedSynthesis}. Bounded synthesis finds a solution that is minimal in the number of states; this generalizes here to a solution that is obtained from the previous
solution with a minimal number of transformations. Like bounded synthesis, we use a SAT encoding to find the transition system, with additional constraints on the type and number of transformations. As explained above, the transformations involve a change of a state label or a redirection of a transition. Within the given budget of states, new states are accessed by redirecting transitions to these states. In the example in \Cref{fig:MotivatingExample}, a budget of four states is fixed and initially unreachable states, such as $t_2$ and $t_3$, are accessed by redirecting transitions to them as done in \Cref{fig:transformation3}  and \Cref{fig:transformation4}.
For the construction of explanations, we use bounded model checking \cite{BoundedSynthesis}.  In this way, both the repair and the explanation can be ensured to be minimal. We evaluate our approach on a series of examples, including benchmarks from the SYNTCOMP competition \cite{DBLP:journals/corr/abs-1904-07736}.  

\subsection*{Related Work}
The importance of incremental algorithms for solving the reactive synthesis problem has quickly manifested itself in the research community after the introduction of the problem. 
 By decomposing a synthesis problem into smaller instances and combining the results of these instances to a solution for the full problem, the hope is to provide scalable algorithms for solving  the in general difficult problem~\cite{lazySynthesis,safralessSynthesis,refinement1,refinement2,refinement3,termite}. 
 For example, for a set of system specifications, one can construct implementations for the individual specifications and construct an implementation for the full specification by composing the results for the smaller specifications~\cite{safralessSynthesis}. To check the realizability of a specification, one can check the realizability of  gradually larger parts of the specification~\cite{termite}.  
Refinement-based synthesis algorithms incrementally construct implementations starting with an abstraction that is gradually refined with respect to a given partial order that guarantees correctness \cite{lazySynthesis,refinement1,refinement2,refinement3}.
The key difference between our approach and the incremental approaches mentioned above is the  underlying repair process. The advantage of program repair is that it  constructs an implementation that is close to the original erroneous implementation. In our approach, this makes it possible to derive explanations that justify the repairs  applied to the previous implementation. Other repair problems for temporal logics have  previously been considered in \cite{programrepairgame,programRepair}. In \cite{programrepairgame}, an expression or a left-hand side of an assignment is assumed to be erroneous and replaced by one that satisfies the specification. In \cite{programRepair}, the repair removes edges from the transition system. 
By contrast, our repair algorithm changes labels of  states and redirects  transitions.  

A completely different  approach to make synthesized implementation more understandable is by posing conditions on the structure of synthesized implementations~\cite{crest17,outputsensitive}. Bounded synthesis \cite{BoundedSynthesis} allows us to bound the size of the constructed implementation. Bounded cycle synthesis \cite{BoundedCycle} additionally bounds the number of cycles in the implementation. Skeleton synthesis~\cite{skeletons} establishes the relations between the specification and the synthesized implementation to clarify which parts of the implementation are determined by the specification and which ones where chosen by the synthesis process. 

\section{Preliminaries}\label{sec:Preliminaries}

\paragraph{Linear-time Temporal Logic:} As specification language, we use Linear-Time Temporal Logic (LTL)~\cite{LTL}, with the usual temporal operators Next $\LTLcircle$, Until $\LTLuntil$ and the derived operators Release $\LTLrelease$, which is the dual operator of $\LTLuntil$, Eventually $\LTLdiamond$ and Globally $\LTLsquare$. Informally, the Release operator $\varphi_1 \LTLrelease \varphi_2$ says that $\varphi_2$ holds in every step until $\varphi_1$ \textit{releases} this condition.
LTL formulas defining specifications for reactive systems are defined over a set of atomic propositions $AP = I \cup O$, which is partitioned into a set $I$ of input propositions and a set $O$ of output propositions.
We denote the satisfaction of an LTL formula $\varphi$ by an infinite sequence $\sigma \colon \mathbb{N} \to 2^{AP}$ of valuations of the atomic propositions by $\sigma \vDash \varphi$. For an LTL formula $\varphi$ we define the language $\mathcal{L}(\varphi)$ by the set $\{\sigma \in (\mathbb{N} \to 2^{AP}) \; | \; \sigma \vDash \varphi \}$.

\paragraph{Implementations:} We represent implementations as \emph{labeled transition systems}. For a given finite set $\Upsilon$ of directions and a finite set $\Sigma$ of labels, a
 $\Sigma$-labeled $\Upsilon$-transition system is a tuple $\mathcal{T} = (T, t_0, \tau, o)$, consisting of a finite set of states $T$, an initial state $t_0 \in T$, a transition function $\tau \colon T \times \Upsilon \to T$, and a labeling function $o \colon T \to \Sigma$.
 For a set $I$ of input propositions and a set $O$ of output propositions, we represent reactive systems as $2^O$-labeled $2^I$-transition systems.
For reactive systems, a \emph{path} in $\mathcal{T}$ is a sequence $\pi \in \mathbb{N} \to T \times 2^I$ of states and directions that follows the transition function, i.e., for all $i \in \mathbb{N}$, if $\pi(i) = (t_i, e_i)$ and $\pi(i+1) = (t_{i+1}, e_{i+1})$, then $t_{i+1} = \tau(t_i, e_i)$. We call a path initial if it starts with the initial state: $\pi(0) = (t_0, e)$ for some $e \in 2^I$. We denote the set of initial paths of $\mathcal{T}$ by $\mathit{Paths}(\mathcal{T})$. For a path $\pi \in \mathit{Paths}(\mathcal{T})$, we denote the sequence $\sigma_\pi \colon i \mapsto o(\pi(i))$, where $o(t,e) = (o(t) \cup e)$ by the \emph{trace} of $\pi$. We call the set of traces of the paths of a transition system $\mathcal{T}$ the language of $\mathcal{T}$, denoted by $\mathcal{L}(\mathcal{T})$.
	
	For a given finite sequence  $v^* \in (2^I)^*$, we denote the transitions sequence where we reach a state $t'$ from state $t$ after applying the transition function $\tau $ for every letter in $v^*$ starting in $t$ by $\tau^*(t, v^*) = t'$.
	The size of a transition system is the size of its set of states, which we denote by $|\mathcal T|$.

	For a set of atomic propositions $AP = I \cup O$, we say that a $2^O$-labeled $2^I$-transition system $\mathcal{T}$ satisfies an LTL formula $\varphi$, if and only if $\mathcal{L}(\mathcal{T}) \subseteq \mathcal{L}(\varphi)$, i.e., every trace of $\mathcal{T}$ satisfies $\varphi$. In this case, we call $\mathcal{T}$ a model of $\varphi$.

\section{Minimal Repairs and Explanations}\label{sec:FormalStatements}
In this section, we lay the foundation for explainable reactive synthesis. We formally define the transformations that are performed by our repair algorithm and determine the complexity of finding a minimal repair, i.e., a repair with the fewest number of transformations, with respect to a given transition system and an LTL specification and show how repairs can be explained by counterexamples that justify the repair. 

\subsection{Generating Minimal Repairs}

For a $2^O$-labeled $2^I$-transition system $\mathcal{T} = (T, t_0, \tau, o)$, an \emph{operation} $\Delta$ is either a \emph{change of a state labeling} or a \emph{redirection of a transition} in $\mathcal{T}$. We denote the transition system $\mathcal{T'}$ that results from applying an operation $\Delta$ to the transition system $\mathcal{T}$ by $\mathcal{T}'= \mathit{apply}(\mathcal{T}, \Delta)$.

A state labeling change is denoted by a tuple $\Delta_\text{label} = (t, v)$, where $t \in T$ and $v \in 2^O$ defines the new output $v$ of state $t$. The transition system $\mathcal T'= \mathit{apply}(\mathcal T,\Delta_\text{label})$ is defined by $\mathcal T'= (T, t_0, \tau, o')$, where $o'(t) = v$ and  $o'(t') = o(t')$ for all $t' \in T$ with $t' \neq t$.

A transition redirection is denoted by a tuple $\Delta_\text{transition} = (t, t', V)$, where $t,t' \in T$ and $V \subseteq {2^I}$. For a  transition redirection operation $\Delta_\text{transition} = (t, t', V)$, the transition system  $\mathcal T' = \mathit{apply}(\mathcal T, \Delta_\text{transition}) $ is defined by $ \mathcal T'= (T, t_0, \tau', o)$, with $\tau'(t,v) = t'$ for $v \in V$ and $\tau'(t,v) = \tau(t,v)$ for $v \notin V$. For $t'' \neq t$ and $v \in 2^I$, $\tau'(t'',v) = \tau(t'',v)$.

A finite set of operations $\xi$ is called a \emph{transformation}. A transformation $\xi$ is \emph{consistent} if there is no transition system $\mathcal{T}$ and $\Delta_1, \Delta_2 \in \xi$ such that $\mathit{apply}(\mathit{apply}(\mathcal{T}, \Delta_1),\Delta_2) \neq \mathit{apply}(\mathit{apply}(\mathcal{T}, \Delta_2),\Delta_1)$, i.e.\ the resulting transition system does not differ depending on the order in which operations are applied.
For a consistent transformation $\xi$, we denote the $2^O$-labeled $2^I$-transition system $\mathcal{T'}$ that we reach after applying every operation in $\xi$ starting with a $2^O$-labeled $2^I$-transition system $\mathcal{T}$ by $\mathcal T'=\mathit{apply}^*(\mathcal{T}, \xi)$. 

Note that there is no operation which explicitly adds a new state. In the example in \Cref{fig:MotivatingExample}, we assume a fixed number of available states (some that might be unreachable in the initial transition system). We reach new states by using a transition redirection operation to these states.

\begin{definition}[Minimal Repair]
	For an \emph{LTL}-formula $\varphi$ over $AP = I \cup O$ and a $2^O$-labeled $2^I$-transition system $\mathcal{T}$, a consistent transformation $\xi$ is a \emph{repair} for $\mathcal T$ and $\varphi$ if $\mathit{apply}^*(\mathcal{T}, \xi) \vDash \varphi$. A repair $\xi$ is \emph{minimal} if  there is no repair $\xi'$ with $|\xi'| < |\xi|$.
\end{definition}

\begin{example}\label{ex:transformations}
	The arbiter $\mathit{Arb}_1$ in \Cref{fig:transformation1} can be obtained from the round-robin arbiter $\mathit{Arb}_0$, shown in \Cref{fig:roundRobin}, by applying $\Delta_\text{label} = (t_0, \emptyset)$, i.e.\ $\mathit{Arb}_1 = \mathit{apply}(\mathit{Arb}_0, \Delta_\text{label})$. Arbiter $\mathit{Arb}_3$, depicted in \Cref{fig:transformation3} is obtained from $\mathit{Arb}_1$ with the transformation $\xi_1 = \{\Delta_\text{transition1},\Delta_\text{transition2}\}$ where $\Delta_\text{transition1} = (t_0, t_0, \{\emptyset\})$ and $\Delta_\text{transition2} = (t_0, t_2, \{\{r_0\}\})$ such that $\mathit{apply}^*(\mathit{Arb}_1, \xi_1) = \mathit{Arb}_3$. A minimal repair for $\mathit{Arb}_0$ and $\varphi_\textit{mutex} \land \varphi_\textit{fairness} \land \varphi_\textit{non-spurious}$, defined in \Cref{sec:Introduction}, is $\xi_2 = \{\Delta_\text{label}, \Delta_\text{transition1}, \Delta_\text{transition2}, \Delta_\text{transition3}, \Delta_\text{transition4}\}$ with $\Delta_\text{transition3} = (t_0, t_3, \{\{r_0, r_1\}\})$ and $\Delta_\text{transition} = (t_1, t_2, \{\{r_0\}, \{r_0, r_1\}\})$. The resulting full arbiter $\mathit{Arb}_5$ is depicted in \Cref{fig:transformation5}, i.e.\ $\mathit{apply}^*(\mathit{Arb}_0, \xi_2) = \mathit{Arb}_5$.
\end{example}

We are interested in finding minimal repairs. The \emph{minimal repair synthesis problem} is defined as follows.

\begin{probdef}[Minimal Repair Synthesis]
	Let $\varphi$ be an \emph{LTL}-formula over a set of atomic propositions $AP = I \cup O$ and $\mathcal{T}$ be a $2^O$-labeled $2^I$-transition system. Find a minimal repair for $\varphi$ and $\mathcal{T}$?
\end{probdef} 

In the next lemma, we show that for a fixed number of operations, the problem of checking if there is a repair is polynomial in the size of the transition system and exponential in the number of operations. For a small number of operations, finding a repair is cheaper than synthesizing a new system, which is 2EXPTIME-complete in the size of the specification \cite{article}. 

\begin{lemma}\label{theorem:decision_minimal_transformation_fixed_state_space}
	For an \textup{LTL}-formula $\varphi$, a $2^O$-labeled $2^I$-transition system $\mathcal{T}$, and a bound $k$, deciding whether there exists a repair $\xi$ for $\mathcal T$ and $\varphi$ with $|\xi|=k$ can be done in time polynomial in the size of $\mathcal T$, exponential in $k$, and space polynomial in the length of $\varphi$. 
\end{lemma}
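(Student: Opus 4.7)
The plan is to enumerate all transformations $\xi$ with $|\xi|=k$ and, for each one, apply it to $\mathcal T$ and then model-check the resulting transition system against $\varphi$. Since no operation adds states, $\mathit{apply}^{*}(\mathcal T,\xi)$ always has the same state set $T$ as $\mathcal T$, so its size is polynomial in $|\mathcal T|$.

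To bound the enumeration, I would first count the atomic operations. There are $|T|\cdot 2^{|O|}$ label changes of the form $(t,v)$ and $|T|^{2}\cdot 2^{2^{|I|}}$ redirections of the form $(t,t',V)$; treating the input and output alphabets as fixed parameters of the input, the total number $N$ of operations is polynomial in $|\mathcal T|$. The number of candidate size-$k$ transformations is then at most $\binom{N}{k}\le N^{k}$, which is polynomial in $|\mathcal T|$ for each fixed $k$ and exponential in $k$. For each candidate $\xi$ I would (i) check consistency by verifying that, for every source state $t$, the $V_{i}$'s of redirections from $t$ with distinct targets are pairwise disjoint and that any two label operations on $t$ agree, both in time polynomial in $|\xi|$ and $|\mathcal T|$; and (ii) compute $\mathcal T'=\mathit{apply}^{*}(\mathcal T,\xi)$ in polynomial time.

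It remains to decide $\mathcal T'\vDash\varphi$. I would invoke the standard automata-theoretic LTL model-checking procedure: build a nondeterministic Büchi automaton for $\neg\varphi$ on the fly and check emptiness of its product with $\mathcal T'$. This runs in time polynomial in $|\mathcal T'|$ and exponential in $|\varphi|$ but, crucially, in space polynomial in $|\varphi|$, since LTL model checking is PSPACE in the formula. Reusing this workspace across iterations, and keeping only an $O(k\log|\mathcal T|)$ enumeration counter between iterations, yields total time $N^{k}\cdot\mathrm{poly}(|\mathcal T|)\cdot 2^{O(|\varphi|)}$ — polynomial in $|\mathcal T|$ for fixed $k$ and exponential in $k$ — and total space polynomial in $|\varphi|$.

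The main obstacle I expect is the combinatorial accounting for redirections: a naive enumeration over all $V\subseteq 2^{I}$ is doubly exponential in $|I|$. I would address this either by fixing $|I|$ and $|O|$ as external parameters (consistent with the paper's convention that $|\mathcal T|$ denotes $|T|$), or, more robustly, by restricting without loss of generality to \emph{canonical} transformations that merge all redirections from a fixed source $t$ to a fixed target $t'$ into a single operation. This rewriting is safe because overlapping redirections with different targets are already forbidden by consistency, and merging can only weakly decrease $|\xi|$.
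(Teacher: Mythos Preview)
Your proposal is correct and follows the same enumerate-and-model-check strategy as the paper's proof: count the candidate transformations of size $k$ (polynomially many in $|\mathcal T|$ and exponentially many in $k$), and for each one invoke the PSPACE LTL model-checking procedure on the resulting system. You are in fact more careful than the paper about the combinatorics of redirections; the paper simply writes $|\mathcal T|\cdot 2^{|O|}$ label changes and $|\mathcal T|^2\cdot 2^{|I|}$ redirections (implicitly treating $|I|$ and $|O|$ as fixed constants) and bounds the number of size-$k$ transformations by $\mathcal O((|\mathcal T|^2)^k)$.
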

	\begin{proof}
		Checking for a transformation $\xi$ if $\mathit{apply}^*(\mathcal{T}, \xi) \vDash \varphi$ is \textup{PSPACE}-complete \cite{clarke}.
		There are $|\mathcal{T}| \cdot 2^{|O|}$ different state labeling operations and $|\mathcal{T}|^2 \cdot 2^{|I|}$ transition redirections. Thus, the number of transformations $\xi$ with $|\xi| = k$ is bounded by $\mathcal{O}((|\mathcal{T}|^2)^k)$. Hence, deciding the existence of such a repair is polynomial in $|\mathcal{T}|$ and exponential in $k$. \qed
	\end{proof}

The size of a minimal repair is bounded by a polynomial in the size of the transition system under scrutiny. Thus, the minimal repair synthesis problem can be solved in time at most exponential in $|\mathcal T|$. In most cases, we are interested in small repairs resulting in complexities  that are polynomial in $|\mathcal T|$. 
\begin{theorem}
	For an \textup{LTL}-formula $\varphi$, a $2^O$-labeled $2^I$-transition system $\mathcal{T}$, finding a minimal repair for $\mathcal T$ and $\varphi$ can be done in time exponential in the size of $\mathcal T$,  and space polynomial in the length of $\varphi$. 
\end{theorem}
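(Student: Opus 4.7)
The strategy is to reduce the search for a minimal repair to repeated applications of the fixed-size decision procedure of \Cref{theorem:decision_minimal_transformation_fixed_state_space}, after pinning down a polynomial-in-$|\mathcal T|$ upper bound on how large a minimal repair can be.

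The first step is to prove the following \emph{a priori} bound: if any repair for $\mathcal T$ and $\varphi$ exists, then one of size at most $N := |\mathcal T|\cdot(1 + 2^{|I|})$ exists. The argument is constructive. Since operations preserve the state set, any transition system reachable by applying a consistent transformation to $\mathcal T=(T,t_0,\tau,o)$ has the form $\mathcal T^\star=(T,t_0,\tau^\star,o^\star)$ for some $\tau^\star,o^\star$. Conversely, such a $\mathcal T^\star$ can always be obtained from $\mathcal T$ by the transformation consisting of one $\Delta_\text{label}=(t,o^\star(t))$ per state $t\in T$ (at most $|\mathcal T|$ operations) together with one $\Delta_\text{transition}=(t,\tau^\star(t,v),\{v\})$ per pair $(t,v)\in T\times 2^I$ (at most $|\mathcal T|\cdot 2^{|I|}$ operations). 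These operations act on pairwise-disjoint ``slots'' of the transition system and therefore commute, so the set is a consistent transformation. Hence any reachable model of $\varphi$ is witnessed by a repair of size at most $N$, which is polynomial in $|\mathcal T|$.

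The second step is the search procedure itself. Iterate $k = 0, 1, \dots, N$, and for each $k$ invoke the decision procedure of \Cref{theorem:decision_minimal_transformation_fixed_state_space} to test whether there is a consistent transformation $\xi$ of size $k$ with $\mathit{apply}^*(\mathcal T,\xi)\vDash\varphi$; the procedure can straightforwardly be instrumented to emit the witnessing $\xi$. Return the first $\xi$ found; if the loop terminates without success, no repair exists. Correctness follows from the size bound of the first step and the correctness of the fixed-$k$ decision procedure.

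For the complexity, \Cref{theorem:decision_minimal_transformation_fixed_state_space} gives time polynomial in $|\mathcal T|$, exponential in $k$, and space polynomial in $|\varphi|$ for each iteration. Summing over $k = 0,\dots,N$, the total time is dominated by the $k = N$ call, yielding $|\mathcal T|^{O(N)} = 2^{O(N\log|\mathcal T|)}$, which is exponential in the size of $\mathcal T$. The space usage is reused across iterations and therefore remains polynomial in $|\varphi|$ (plus a $\log N$ counter). The only step that requires genuine care is the size bound, since everything else is a routine wrapper around the previous lemma; the bound itself is elementary once one observes that operations can be chosen to act on disjoint slots, making their consistency automatic.
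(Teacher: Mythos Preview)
Your proposal is correct and follows essentially the same route as the paper: the theorem is stated there without a formal proof, relying on the preceding remark that the size of a minimal repair is bounded by a polynomial in $|\mathcal T|$, after which one simply iterates \Cref{theorem:decision_minimal_transformation_fixed_state_space} over $k$. The one minor difference is the specific bound: the paper (cf.\ \Cref{alg:minimalTf}) uses $N=|\mathcal T|+|\mathcal T|^2$, counting at most one label-change per state and one redirect per source--target pair with $V$ chosen as the full set of inputs to be redirected, whereas you use $N=|\mathcal T|(1+2^{|I|})$ with singleton $V$'s; the paper's bound is input-alphabet-independent and makes the ``polynomial in $|\mathcal T|$'' claim cleaner, but your bound also suffices.
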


\subsection{Generating Explanations}

For an LTL-formula $\varphi$ over $AP = I \cup O$, a transformation $\xi$ for a $2^O$-labeled $2^I$-system $\mathcal{T}$ is \emph{justified} by a counterexample $\sigma$ if $\sigma \nvDash \varphi$, $ \sigma \in \mathcal{L}(\mathcal{T})$ and $\sigma \notin \mathcal{L}(\mathit{apply}^*(\mathcal{T}, \xi))$. We call $\sigma$ a \emph{justification} for $\xi$. A transformation $\xi$ is called \emph{justifiable} if there exists a justification $\sigma$ for $\xi$.

 A transformation $\xi$ for $\mathcal{T}$ and $\varphi$ is \emph{minimally} justified by $\sigma$ if $\xi$ is justified by $\sigma$ and there is no $\xi' \subset \xi$ where $\sigma$ is a justification for $\xi'$. If a transformation $\xi$ is minimally justified by a counterexample $\sigma$, we call $\sigma$ a \emph{minimal} justification.
 
\begin{definition}[Explanation]
	For an \emph{LTL}-formula $\varphi$ over $AP = I \cup O$, an initial $2^O$-labeled $2^I$-transition system $\mathcal{T}$, and a minimal repair $\xi$, an \emph{explanation} $\mathit{ex}$ is defined as a sequence of pairs of transformations and counterexamples.  For an explanation $\mathit{ex} = (\xi_1, \sigma_1), \dots, (\xi_n, \sigma_n)$, it holds that all transformations $\xi_1,\dots,\xi_n$ are disjoint, $\xi = \bigcup_{1 \leq i \leq n } \xi_i$, and each transformation $\xi_i$ with $1 \leq i \leq n$ is minimally justified by $\sigma_i$ for $\mathit{apply}^*(\mathcal{T}, \bigcup_{1 \leq j < i} \xi_j)$ and $\varphi$.
\end{definition}

\begin{example}
	Let $\varphi_1 = g \LTLand \LTLcircle \neg g$ over $I = \{r\}$ and $O = \{g\}$ and consider the $2^O$-labeled $2^I$-transition system $\mathcal{T}$ with states $\{t_0,t_1\}$, depicted in \Cref{fig:example}.
	\makeatletter
	\let\par\@@par
	\par\parshape0
	\everypar{}\begin{wrapfigure}{r}{.5\textwidth} 
		\centering
		\scalebox{0.8}{
		\begin{tikzpicture}[initial text=$$]
            	\node[state] (0) {$t_0$};
            	\path[draw,->](-.5,.5) -- (0);
				\node[state] (1) [right = 3 of 0] {$t_1$};
				\node (label0) [left = 0 of 0] {$\{g\}$};
				\node (label1) [left = 0 of 1] {$\{g\}$};
				\path[->]
				(0) edge[loop right] node[right] {$\emptyset, \{r\}$} (0)
				(1) edge[loop right] node[right] {$\emptyset, \{r\}$} (1);
		\end{tikzpicture}
		}
		\caption{A transition system over $I = \{r\}$ and $O = \{g\}$ that is not a model of $\varphi_1$.}
		\label{fig:example}
	\end{wrapfigure}
	\noindent For $\mathcal{T}$ and $\varphi_1$, the transformation $\xi$ with $\xi = \{\Delta_\text{transition}\}$ where $\Delta_\text{transition} = (t_0, t_1, \{\{g\}, \emptyset\})$, is not justifiable because $\mathcal{L}(\mathcal{T}) = \mathcal{L}(\mathit{apply}^*(\mathcal{T}, \xi))$. For our running example, introduced in \Cref{sec:Introduction}, the transformation $\xi_1 = \{\Delta_\text{label}\}$ that is defined in \Cref{ex:transformations}, is justifiable for the round-robin arbiter $\mathit{Arb}_0$ and $\varphi_\text{mutex} \land \varphi_\text{fairness}\land \varphi_\text{non-spurious}$. It is justified by the counterexample $\sigma_1 = (\{g_0\} \cup \emptyset, \{g_1\} \cup \emptyset)^\omega$, indicated by the red arrows in \Cref{fig:roundRobin}. Further, $\sigma_1$ is a minimal justification. The transformation $\xi_2 = \{\Delta_\text{label}, \Delta_\text{transition1}\}$ for $\mathit{Arb}_0$ is not minimally justified by $\sigma_1$ as $\sigma_1$ is a justification for $\xi_1$ and $\xi_1 \subset \xi_2$. An explanation $\mathit{ex}$ for $\mathit{Arb}_0$, the LTL-formula $\varphi_\text{mutex} \land \varphi_\text{fairness}\land \varphi_\text{non-spurious}$ and the minimal repair $\xi_3 = \{\Delta_\text{label}, \Delta_\text{transition1}, \Delta_\text{transition2}, \Delta_\text{transition3}, \Delta_\text{transition4}\}$ is $\mathit{ex} = (\Delta_\text{label}, \sigma_1),(\Delta_\text{transition1},\sigma_2), (\Delta_\text{transition2}, \sigma_3), (\Delta_\text{transition3}, \sigma_4), (\Delta_\text{transition4}, \sigma_5)$\\ with $\sigma_2 = (\emptyset \cup \emptyset, \{g_1\} \cup \emptyset)^\omega$, $\sigma_3 = (\emptyset \cup \{r_0\}, \{g_1\} \cup \{r_0\})^\omega$, $\sigma_4 = (\emptyset \cup \{r_0, r_1\}, \{g_1\} \cup \emptyset)^\omega$ and $\sigma_5 = (\emptyset \cup \{r_1\}, \{g_1\} \cup \{r_0\})^\omega$. The different justifications are indicated in the subfigures of \Cref{fig:MotivatingExample}.
\end{example}

In the next theorem, we show that there  exists an explanation for every minimal repair.

\begin{theorem}\label{theorem:exists_expl_for_every_minimal_tf}
	For every minimal repair $\xi$ for an \textup{LTL}-formula $\varphi$ over $AP = I \cup O$ and a $2^O$-labeled $2^I$-transition system $\mathcal{T}$, there exists an explanation.
\end{theorem}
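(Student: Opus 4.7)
The plan is to construct the explanation iteratively by peeling off minimally-justified subsets of $\xi$ one at a time, using $\xi$ itself as a budget. Set $\mathcal{T}_0 = \mathcal{T}$ and $R_0 = \xi$. At each step $i \geq 0$, if $\mathcal{T}_i \vDash \varphi$, halt; otherwise pick any trace $\sigma_{i+1} \in \mathcal{L}(\mathcal{T}_i)$ with $\sigma_{i+1} \nvDash \varphi$, choose a non-empty subset $\xi_{i+1} \subseteq R_i$ that is inclusion-minimal under the constraint $\sigma_{i+1} \notin \mathcal{L}(\mathit{apply}^*(\mathcal{T}_i, \xi_{i+1}))$, and then set $\mathcal{T}_{i+1} = \mathit{apply}^*(\mathcal{T}_i, \xi_{i+1})$ and $R_{i+1} = R_i \setminus \xi_{i+1}$.

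To see that each step is well-defined, observe that any subset of the consistent transformation $\xi$ is itself consistent, so pairwise commutativity of operations lifts to reordering arbitrary subsets. Consequently $\mathit{apply}^*(\mathcal{T}_i, R_i) = \mathit{apply}^*(\mathcal{T}, \xi) \vDash \varphi$, and therefore $\sigma_{i+1} \notin \mathcal{L}(\mathit{apply}^*(\mathcal{T}_i, R_i))$. Hence at least one subset of $R_i$ excludes $\sigma_{i+1}$ and an inclusion-minimal such $\xi_{i+1}$ exists; it is non-empty because $\sigma_{i+1} \in \mathcal{L}(\mathcal{T}_i)$. By construction, $\sigma_{i+1}$ is then a justification for $\xi_{i+1}$ with respect to $\mathcal{T}_i$ and $\varphi$, and the inclusion-minimality of $\xi_{i+1}$ makes this justification minimal. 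Disjointness of the $\xi_i$'s is immediate, since $\xi_{i+1} \subseteq R_i = \xi \setminus \bigcup_{j \leq i} \xi_j$.

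Termination holds because $|R_i|$ strictly decreases, so the process halts after some step $n \leq |\xi|$. At termination $\mathcal{T}_n \vDash \varphi$, so $\bigcup_{i=1}^n \xi_i$ is a repair for $\mathcal{T}$ and $\varphi$ (using consistency to identify $\mathcal{T}_n$ with $\mathit{apply}^*(\mathcal{T}, \bigcup_i \xi_i)$). The decisive step is now to invoke minimality of $\xi$: since $\bigcup_i \xi_i \subseteq \xi$ and this union already repairs $\mathcal{T}$, a strict inclusion would yield a repair of size smaller than $|\xi|$, contradicting minimality. Therefore $\bigcup_i \xi_i = \xi$, and the sequence $(\xi_1,\sigma_1),\ldots,(\xi_n,\sigma_n)$ satisfies all three defining conditions of an explanation.

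The main obstacle is precisely the last step: one must combine consistency (to reorder and restrict subsets of $\xi$ without changing the apply-result) with minimality of $\xi$ (to rule out halting before all of $\xi$ has been consumed). The rest of the argument is a greedy peeling construction and does not require any additional machinery beyond the earlier definitions.
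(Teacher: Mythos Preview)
Your proof is correct and follows essentially the same greedy peeling construction as the paper: pick a counterexample for the current system, extract an inclusion-minimal subset of the remaining operations that eliminates it, and repeat. You are more explicit than the paper in invoking consistency to reorder/restrict subsets and in using minimality of $\xi$ to argue that the process cannot halt before all of $\xi$ is consumed, but the underlying argument is the same.
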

\begin{proof}
		Let $\xi = \{\Delta_1, \dots, \Delta_n\}$ be a minimal repair for the $LTL$-formula $\varphi$ and the transition system $\mathcal{T}$. An explanation $\mathit{ex}$ can be constructed as follows. Let $\sigma \in \mathcal{L}(\mathcal{T})$ with $\sigma \nvDash \varphi$. Since $\xi$ is a minimal repair, $\sigma \notin \mathcal{L}(\mathit{apply}^*(\mathcal{T}, \xi))$. The smallest subset $\xi' \subseteq \xi$ with $\sigma \notin \mathcal{L}(\mathit{apply}^*(\mathcal{T}, \xi'))$ is minimally justified by $\sigma$. Thus $(\xi',\sigma)$ is the first element of the explanation $\mathit{ex}$. For the remaining operations in $\xi \backslash \xi'$, we proceed analogously. The counterexample $\sigma$ is now determined for $\mathit{apply}^*(\mathcal{T}, \xi')$. The construction is finished if either every transformation is minimally justified by a counterexample and there is no operation left or there is no justification for a transformation which clearly contradicts that $\xi$ is a minimal repair. Hence, $\mathit{ex}$ is an explanation for $\xi$. \qed
\end{proof}

From the last theorem we know that we can find an explanation for every minimal repair. It is however important to notice that it is not necessarily the case that we can find justifications for each singleton transformation in the repair as shown by the following example. Let $\varphi_2 = \neg g \land \LTLcircle \neg g \land ((\LTLsquare \neg r) \to \LTLcircle \LTLcircle g)$ over $I = \{r\}$, $O = \{g\}$ and consider the $2^O$-labeled $2^I$-transition system $\mathcal{T}$ with the set of states $\{t_0, t_1, t_2\}$, depicted in \Cref{fig:example2}.
	\begin{wrapfigure}{r}{.5\textwidth}
		\centering
		\scalebox{0.8}{
		\begin{tikzpicture}[initial text=$$]
            	\node[state] (0) {$t_0$};
            	\path[draw,->](-.5,.5) -- (0);
				\node[state] (1) [right = 1.5 of 0] {$t_1$};
				\node[state] (2) [right = 1.5 of 1] {$t_2$};
				\node (label0) [below = 0 of 0] {$\emptyset$};
				\node (label1) [below = 0 of 1] {$\emptyset$};
				\node (label2) [below = 0 of 2] {$\{g\}$};
				\path[->]
				(0) edge[loop above] node[above] {$\emptyset$} (0)
				(0) edge node[above] {$\{r\}$} (1)
				(1) edge[loop above] node[above] {$\emptyset$} (1)
				(1) edge node[above] {$\{r\}$} (2)
				(2) edge[loop above] node[above] {$\emptyset, \{r\}$} (2);
		\end{tikzpicture}
	}
	\caption{A transition system over $I = \{r\}$ and $O = \{g\}$ that is not a model of $\varphi_2$.}
	\label{fig:example2}
	\end{wrapfigure}
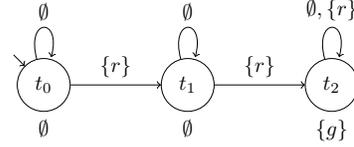
\noindent For $\varphi_2$ and $\mathcal{T}$, the transformation $\xi$ with $\xi = \{\Delta_\text{transition1}, \Delta_\text{transition2}\}$ where $\Delta_\text{transition1} = (t_0, t_1, \{\emptyset\}),$ and $\Delta_\text{transition2} = (t_1, t_2, \{\emptyset\})$, is a minimal repair. The counterexample $\sigma = \emptyset^\omega$ is the only one with $\sigma \in \mathcal{L}(\mathcal{T})$. For an explanation $\mathit{ex} = (\xi_1, \sigma_1),(\xi_2, \sigma_2)$ where $\xi_i$ is a singleton, for all $1 \leq i \leq 2$, either  $\xi_1 = \{\Delta_\text{transition1}\}$ or $\xi_1 = \{\Delta_\text{transition2}\}$. However, in both cases, $\sigma \in \mathcal{L}(\mathit{apply}^*(\mathcal{T}, \xi_1))$. Thus, there are minimal repairs where not every operation can be justified on its own. Furthermore, it should be noted that explanations are not unique as there can exist different justifications for the same transformation, i.e.\ there can exist multiple different explanations for the same minimal repair. For the round-robin arbiter in \Cref{fig:roundRobin} and the specification $\varphi_\textit{mutex} \land \varphi_\textit{fairness}\land \varphi_\textit{non-spurious}$, the transformation $\xi_1 = \{\Delta_\text{label}\}$ can be minimally justified by $(\{g_0\} \cup \emptyset, \{g_1\} \cup \emptyset)^\omega$ and by $(\{g_0\} \cup \{r_1\}, \{g_1\} \cup \emptyset)^\omega$.\\

We refer to the problem of finding an explanation for a minimal repair as the the \emph{explanation synthesis problem}. 
\begin{probdef}[Explanation Synthesis]
Let $\varphi$ be an LTL-formula over $AP = I \cup O$, $\mathcal{T}$ be a $2^O$-labeled $2^I$-transition system and $\xi$ be a minimal repair. Find an explanation $\mathit{ex}$ for $\varphi$, $\mathcal{T}$ and $\xi$. 
\end{probdef}

\section{SAT-based Algorithms for Minimal Repair and Explanation Synthesis}\label{sec:algorithms}

In this section, we present SAT-based algorithms to solve the minimal repair synthesis problem and the explanation synthesis problem. 

\begin{algorithm}[t]
\caption{\textsc{MinimalRepair}($\mathcal{T}, \varphi$)}\label{alg:minimalTf}
\begin{multicols}{2}
\begin{algorithmic}[1]
\State $\mathit{left} \gets 0$
\State $\mathit{right} \gets |\mathcal{T}| + |\mathcal{T}| \cdot |\mathcal{T}|$
\State $\mathit{exist} \gets \mathit{false}$
\While {$\mathit{left} < \mathit{right}$}
\State $k \gets \lfloor \frac{\mathit{left} + \mathit{right}}{2} \rfloor$
\State $(\mathit{found}, \xi) \gets \textsc{Repair}(\mathcal{T}, \varphi, k)$
\If {$\mathit{found}$}
\State $\mathit{right} \gets k-1$
\State $\xi_\textit{min} \gets \xi$
\State $\mathit{exists} \gets \mathit{true}$
\Else 
\State $\mathit{left} \gets k+1$
\EndIf
\EndWhile
\columnbreak
\State $(\mathit{found}', \xi') \gets \textsc{Repair}(\mathcal{T}, \varphi, \mathit{left})$
\If {$!\mathit{exists}$}
\State \Return no minimal repair exists
\EndIf
\If {$\mathit{found}'$}
\State \Return $\xi'$
\Else 
\State \Return $\xi_\textit{min}$
\EndIf
\end{algorithmic}
\end{multicols}
\end{algorithm}

\subsection{Generating Minimal Repairs}\label{sec:alg_for_min_transformation}

The procedure $\textsc{MinimalRepair}(\mathcal{T}, \varphi)$, shown in \Cref{alg:minimalTf}, solves the minimal repair synthesis problem. For a given LTL-formula $\varphi$ over $AP = I \cup O$ and $2^O$-labeled $2^I$-transition system $\mathcal{T}$, \Cref{alg:minimalTf} constructs a minimal repair $\xi$ if one exists. We use binary search to find the minimal number $k$ of required operations. The possible number of operations can be bounded by $|\mathcal{T}|+|\mathcal{T}|\cdot |\mathcal{T}|$ as there are only $|\mathcal{T}|$ state labelings and $|\mathcal{T}| \cdot |\mathcal{T}|$ transition redirects. Checking whether there is a transformation $\xi$ with $|\xi|\leq k$ such that $\mathit{apply}^*(\mathcal{T},\xi) \vDash \varphi$ is done by using the procedure $\textsc{Repair}(\mathcal{T},\varphi,k)$ which is explained next.\\

\begin{figure}[t!]
\begin{align*}
    	\phi_{cost} = &\bigwedge_{0 \leq t,n < |\mathcal{T}|, c \leq k} \mathit{rdTrans}_{t,n,c} \land \mathit{notRdTrans}_{t,n,c} \land \neg cost_{t,n, k+1} \\
    	&\bigwedge_{0 \leq t < |\mathcal{T}|, c \leq k} \mathit{changeLabel}_{t,c} \land \mathit{notChangeLabel}_{t,c} \land \neg cost_{t, |\mathcal{T}|, k+1} \\
    			&\\
    	\mathit{rdTrans}_{t, n, c} &=
    	\begin{cases}
    	\mathit{trans}_{0,0} \to \mathit{cost}_{0,0,1} &\text{if $t = 0 \land n = 0$}\\
    	\mathit{cost}_{t-1,|\mathcal{T}|,c} \land \mathit{trans}_{t,n} \to \mathit{cost}_{t,n,c+1} &\text{if $t > 0 \land n = 0$}\\
    	\mathit{cost}_{t,n-1,c} \land \mathit{trans}_{t,n} \to \mathit{cost}_{t,n,c+1} &\text{if $ n > 0$}\\
    	\end{cases}\\
    	\mathit{notRdTrans}_{t, n, c} &=
    	\begin{cases}
    	\neg \mathit{trans}_{0,0} \to \mathit{cost}_{0,0,0} &\text{if $t = 0 \land n = 0$}\\
    	\mathit{cost}_{t-1,|\mathcal{T}|,c} \land \neg \mathit{trans}_{t,n} \to \mathit{cost}_{t,n,c} &\text{if $t > 0 \land n = 0$}\\
    	\mathit{cost}_{t,n-1,c} \land \neg \mathit{trans}_{t,n} \to \mathit{cost}_{t,n,c} &\text{if $ n > 0$}\\
    	\end{cases}\\
    	&\\
    	\mathit{changeLabel}_{t,c} &= \mathit{cost}_{t,|\mathcal{T}|-1,c} \land \mathit{label}_t \to \mathit{cost}_{t,|\mathcal{T}|,c+1}\\
    	\mathit{notChangeLabel}_{t,c} &= \mathit{cost}_{t,|\mathcal{T}|-1,c} \land \neg \mathit{label}_t \to \mathit{cost}_{t,|\mathcal{T}|,c}\\
    	&\\
      	\mathit{label}_{t} &= \bigvee_{o \in O} 
      		\begin{cases}
      			o'_t &\text{if $o \notin o(t)$}\\
      			\neg o'_t &\text{if $o \in o(t)$}\\
      		\end{cases}\\
      	\mathit{trans}_{t,t'} &= \bigvee_{i \in 2^I}
      		\begin{cases}
      			\tau'_{t,i,t'} &\text{if $\tau(t,i) \neq t'$}\\
      			\bot &\text{if $\tau(t,i) = t'$}\\
      		\end{cases}
    \end{align*}
\caption{The constraint $\phi_\textit{cost}$ ensures that at most $k$ operations are applied.}
\label{fig:constraintSystem}
\end{figure}	

$\textsc{Repair}(\mathcal{T}, \varphi, k):$ To check whether there is a repair $\xi$ for a $2^O$-labeled $2^I$-transition system $\mathcal{T}$ with $k$ operations, we need to ensure that the resulting transition system is a model for $\varphi$, i.e.\ $\mathit{apply}^*(\mathcal{T},\xi) \vDash \varphi$. 
To check the existence of  a transition system $\mathcal{T}'$ with bound $n = |\mathcal{T}'|$ that implements $\varphi$, we use the SAT-based encoding of bounded synthesis \cite{encodingsOfBoundedSynthesis}. Bounded synthesis is a synthesis procedure for LTL-formulas that produces size-optimal transition systems~\cite{BoundedSynthesis}. For a given LTL formula $\varphi$, a universal co-B\"uchi automaton $\mathcal{A}$ that accepts $\mathcal{L}(\varphi)$ is constructed. A transition system $\mathcal{T}$ satisfies $\varphi$ if every path of the run graph, i.e.\ the product of $\mathcal{T}$ and $\mathcal{A}$, visits a rejecting state only finitely often. An annotation function $\lambda$ confirms that this is the case. The bounded synthesis approach constructs a transition system with bound $n$ by solving a constraint system that checks the existence of a transition system $\mathcal{T}$ and a valid annotation function $\lambda$. In the bounded synthesis constraint system for the $2^O$-labeled $2^I$-transition system $\mathcal{T}' = (T, t_0, \tau', o')$, the transition function $\tau'$ is represented by a variable $\tau'_{t,i,t'}$ for every $t,t' \in T$ and $i \in 2^I$.  The variable $\tau'_{t,i,t'}$ is true if and only if $\tau'(t,i) = t'$. The labeling function $o'$ is represented by a variable $o_t'$ for every $o \in O$ and $t \in T$ and it holds that $o_t'$ is true if and only if $o \in o'(t)$. For simplicity, states are represented by natural numbers.

To ensure that the transition system $\mathcal{T}'$ can be obtained with at most $k$ operations from a given transition system $\mathcal{T}= (T, t_0, \tau, o)$, the bounded synthesis encoding is extended with the additional constraint $\phi_{cost}$ shown in \Cref{fig:constraintSystem}.
For states $t,t'$, the constraint $\mathit{trans}_{t,t'}$ holds iff there is a redirected transition from $t$ to $t'$, i.e.\ there exists an $i \in 2^I$ with $\tau'(t,i)=t'$ and $\tau(t,i) \neq t'$. The constraint $label_t$ holds iff the state labeling of state $t$ is changed, i.e.\ $o(t) \neq o'(t)$.
To count the number of applied operations, we use an implicit ordering over all the possible operations: starting from state 0, we first consider all potential transition redirects to states $0, 1, \ldots, |\mathcal{T}|-1$, then the potential state label change of state 0, then the transition redirects from state 1, and so on.   For state $t$ and operation $n$, where $n$ ranges from $0$ to $|\mathcal{T}|$ (where  $n<|\mathcal{T}|$ refers to the transition redirect operation to state $n$ and $n=|\mathcal{T}|$ refers to the state label change operation), the variable $\mathit{cost}_{t,n,c}$ is true if the number of applied operations so far is $c$. This bookkeeping is done by constraints $\mathit{rdTrans}_{t, n, c}$, $\mathit{notRdTrans}_{t, n, c}$, $\mathit{changeLabel}_{t,c}$ and $\mathit{notChangeLabel}_{t,c}$.
Constraints $\mathit{rdTrans}$ and $\mathit{notRdTrans}$ account for the presence and absence, respectively, of transition redirects,
constraints $\mathit{changeLabel}_{t,c}$ and $\mathit{notChangeLabel}_{t,c}$ for the presence and absence of state label changes.
In order to bound the total number of operations by $k$, $\phi_{cost}$ requires that $cost_{t,n, k+1}$ is false for all states $t$ and operations $n$.

In the next theorem, we state the size  of the resulting constraint system, based on the size of the bounded synthesis constraint system given in \cite{encodingsOfBoundedSynthesis}.

	\begin{theorem}\label{theorem:complexity-constraint-system}
		The size of the constraint system is in $\mathcal{O}(nm^2\cdot 2^{|I|} \cdot (|\delta_{q,q'}| + n \log(nm)) + kn^2)$ and the number of variables is in $\mathcal{O}(n(m \log(nm) + 2^{|I|} \cdot (|O|+n)) + kn^2)$, where $n = |\mathcal{T}'|, m = |Q|$ and $k$ the number of allowed operations.
	\end{theorem}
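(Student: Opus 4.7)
\medskip
\noindent\textbf{Proof plan.}
The plan is to argue by decomposition: bound the size of the bounded synthesis encoding from~\cite{encodingsOfBoundedSynthesis} and then bound the overhead introduced by the additional constraint $\phi_\textit{cost}$, showing that the overhead contributes only the extra $kn^2$ term (for both size and variable count). Since the first summand in each bound is exactly the complexity of the standard bounded synthesis encoding for the universal co-B\"uchi automaton $\mathcal{A}$ of size $m$ with transition labeling $\delta_{q,q'}$ and a candidate transition system $\mathcal{T}'$ of size $n = |\mathcal{T}'|$, I would simply cite \cite{encodingsOfBoundedSynthesis} for that part and concentrate on $\phi_\textit{cost}$.

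First, I would count the fresh propositional variables introduced by $\phi_\textit{cost}$. The only new variables are the bookkeeping variables $\mathit{cost}_{t,n,c}$, indexed by a state $t \in T$, a position $n \in \{0,\dots,|\mathcal{T}'|\}$ in the implicit ordering of operations, and a counter value $c \in \{0,\dots,k+1\}$. This yields $\mathcal{O}(n^2 k)$ variables. The abbreviations $\mathit{trans}_{t,t'}$ and $\mathit{label}_t$ do not need to be counted as new variables: they expand to formulas over the $\tau'_{t,i,t'}$ and $o'_t$ variables of the bounded synthesis encoding. (If one prefers to introduce them as helper variables, only $\mathcal{O}(n^2)$ additional variables with defining clauses of total size $\mathcal{O}(n^2 \cdot 2^{|I|} + n|O|)$ are added, and this is absorbed by the first summand.)

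Second, I would count the clauses contributed by $\phi_\textit{cost}$. Each of the four families $\mathit{rdTrans}_{t,n,c}$, $\mathit{notRdTrans}_{t,n,c}$, $\mathit{changeLabel}_{t,c}$, $\mathit{notChangeLabel}_{t,c}$ is indexed by at most a triple $(t,n,c)$ with $t,n < |\mathcal{T}'|$ and $c \le k$, giving $\mathcal{O}(n^2 k)$ implications. Treating $\mathit{trans}_{t,n}$ and $\mathit{label}_t$ as (defined) helper variables, each implication is of constant size, so the total contribution to the formula size is $\mathcal{O}(n^2 k)$. The final unit clauses $\neg \mathit{cost}_{t,n,k+1}$ add only $\mathcal{O}(n^2)$ more.

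Finally, I would sum the two accountings. The variable count is the sum of the bounded synthesis variables, $\mathcal{O}(n(m\log(nm) + 2^{|I|}(|O|+n)))$, and the $\mathcal{O}(n^2 k)$ cost variables, yielding the second displayed bound. The total size is the sum of the bounded synthesis size $\mathcal{O}(nm^2 \cdot 2^{|I|} \cdot (|\delta_{q,q'}| + n\log(nm)))$ and the $\mathcal{O}(n^2 k)$ contribution of $\phi_\textit{cost}$, yielding the first displayed bound. The main obstacle is purely bookkeeping: one must track carefully that the counter index $c$ only ranges up to $k+1$ and that the abbreviations $\mathit{trans}_{t,t'}$ and $\mathit{label}_t$ do not blow up the per-clause size beyond what the bounded synthesis encoding already accounts for; once this is nailed down, the addition is immediate.
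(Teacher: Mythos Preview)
Your decomposition is exactly the one the paper relies on: the theorem statement itself is written as ``bounded synthesis bound $+$ $kn^2$'', and the paper states the result without proof, merely pointing to~\cite{encodingsOfBoundedSynthesis} for the first summand. Your accounting of the $\mathit{cost}_{t,n,c}$ variables and of the four implication families is the intended argument, and your observation that $\mathit{trans}_{t,t'}$ and $\mathit{label}_t$ must be treated as Tseitin-style helper variables (so that the $2^{|I|}$ blow-up in their definitions is absorbed by the bounded synthesis term rather than multiplied into the $kn^2$ term) is precisely the bookkeeping point one has to make explicit to obtain the stated bound.
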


\begin{algorithm}[t]
\caption{\textsc{Explanation}($\mathcal{T}, \varphi, \xi$)}\label{alg:explanation}
\begin{algorithmic}[1]
\State $\mathcal{T}_\textit{old} \gets \mathcal{T}$
\State $\mathit{ex} \gets ()$
\While {$\xi \neq \emptyset$}
\State $\sigma \gets \textsc{BMC}(\mathcal{T}_\textit{old}, \varphi)$
\State $\mathit{minimal} \gets \mathit{false}$
\State $\xi' \gets \xi$
\While {$!\mathit{minimal}$}
\State $\mathit{minimal} \gets \mathit{true}$
\For {$\Delta \in \xi'$}
\State $\mathcal{T}_\textit{new} \gets \mathit{apply}^*(\mathcal{T}_\textit{old}, \xi'\backslash\{\Delta\})$
\If {$\sigma \notin \mathcal{L}(\mathcal{T}_\textit{new})$}
\State $\mathit{minimal} \gets \mathit{false}$
\State $\xi' \gets \xi'\backslash\{\Delta\}$
\EndIf
\EndFor
\EndWhile
\State $\mathcal{T}_\textit{old} \gets \mathit{apply}^*(\mathcal{T}_\textit{old}, \xi')$
\State $\mathit{ex} \gets \mathit{ex}.\textsc{Append}(\xi', \sigma)$
\State $\xi \gets \xi \backslash \xi'$
\EndWhile
\State \Return $\mathit{ex}$
\end{algorithmic}
\end{algorithm}

\subsection{Generating Explanations}\label{sec:alg_for_explanation}

We describe now how we can solve the explanation synthesis problem for a given LTL-formula $\varphi$ over $AP = I \cup O$, a $2^O$-labeled $2^I$-transition system $\mathcal{T}$ and a minimal repair $\xi$. The minimal repair $\xi$ can be obtained from \Cref{alg:minimalTf}. 
The construction of the explanation follows the idea from the proof of \Cref{theorem:exists_expl_for_every_minimal_tf} and is shown in \Cref{alg:explanation}. An explanation $\mathit{ex}$ is a sequence of transformations $\xi_i$ and counterexamples $\sigma_i$ such that every transformation $\xi_i$ can be minimally justified by $\sigma_i$. A counterexample $\sigma$ for the current transition system $\mathcal{T}_{old}$ is obtained by Bounded Model Checking (BMC) \cite{BMC} and is a justification for $\xi$ as $\xi$ is a minimal repair. BMC checks if there is a counterexample of a given bound $n$ that satisfies the negated formula $\neg \varphi$ and is contained in $\mathcal{L}(\mathcal{T})$. 
The constraint system $\phi_\mathcal{T} \land \phi_\textit{loop} \land \llbracket \neg \varphi \rrbracket$ is composed of three components. $\phi_{\mathcal{T}}$ encodes the transition system $\mathcal{T}$, where each state $t \in T$ is represented as a boolean vector. $\phi_\textit{loop}$ ensures the existence of exactly one loop of the counterexample, and the fixpoint formula $\llbracket \neg\varphi \rrbracket$ ensures that the counterexample satisfies the LTL formula. To obtain a minimal justification, we need to ensure that there is no transformation $\xi' \subset \xi$ such that $\sigma$ justifies $\xi'$. As long as there is an operation $\Delta$ such that $\sigma \notin \mathcal{L}(\mathit{apply}^*(\mathcal{T}_{old}, \xi'\backslash\{\Delta\}))$, $\sigma$ is not a minimal justification for $\xi'$. Otherwise $\sigma$ minimally justifies $\xi'$ and $(\xi',\sigma)$ can be appended to the explanation. The algorithm terminates and returns an explanation if $\xi$ is empty, i.e.\ every operation is justified. The presented algorithm solves the BMC-problem at most $|\xi|$-times and the number of checks if a counterexample is contained in the language of a transition system is bounded by $|\xi|^2$. The correctness of \Cref{alg:explanation} is shown in \Cref{theorem:exists_expl_for_every_minimal_tf}.

\section{Experimental Results}\label{sec:Evaluation}

We compare our explainable synthesis approach with \textbf{BoSy} \cite{BoSy}, a traditional synthesis tool, on several benchmarks. After describing the different benchmark families and technical details, we explain the observable results.  

\subsection{Benchmark Families}

The benchmarks families for arbiter, AMBA and load balancer specifications are standard specifications of SYNTCOMP \cite{SYNTCOMP2017}. For the scaling benchmarks only a constant number of operations is needed. The remaining benchmarks synthesize support for different layers of the OSI communication network. 

\begin{itemize}
	\item 
		\textbf{Arbiter:} An \emph{arbiter} is a control unit that manages a shared resource. $\mathit{Arb}_n$ specifies a system to eventually grant every request for each of the $n$ clients and includes mutual exclusion, i.e.\ at most one grant is given at any time. $\mathit{ArbFull}_n$ additionally does not allow spurious grants, i.e.\ there is only given a grant for client $i$ if there is an open request of client $i$. 
	\item
		\textbf{AMBA:} The \emph{ARM Advanced Microcontroller Bus Architecture} (AMBA) is an arbiter allowing additional features like locking the bus for several steps. The specification $\mathit{AMBAEnc}_n$ is used to synthesize the encode component of the decomposed AMBA arbiter with $n$ clients that need to be controlled. $\mathit{AMBAArb}_n$ specifies the arbiter component of a decomposed AMBA arbiter with an $n$-ary bus, and $\mathit{AMBALock}_n$ specifies the lock component.
	\item
		\textbf{Load Balancer:} A \emph{load balancer} distributes a number of jobs to a fixed number of server. $\mathit{LoadFull}_n$ specifies a load balancer with $n$ clients.
	\item
		\textbf{Bit Stuffer:} \emph{Bit stuffing} is a part of the physical layer of the OSI communication network which is responsible for the transmission of bits. Bit stuffing inserts non-information bits into a bit data stream whenever a defined pattern is recognized. $\mathit{BitStuffer}_n$ specifies a system to signal every recurrence of a pattern with length $n$.
	\item
		\textbf{ABP:} The \emph{alternating bit protocol} (ABP) is a standard protocol of the data link layer of the OSI communication network which transmits packets. Basically, in the ABP, the current bits signals which packet has to be transmitted or received. $\mathit{ABPRec}_n$ specifies the ABP with $n$ bits for the receiver and $\mathit{ABPTran}_n$ for the transmitter.
	\item
		\textbf{TCP-Handshake:} A \emph{transmission control protcol} (TCP) supports the transport layer of the OSI communication network which is responsible for the end-to-end delivery of messages. A TCP-handshake starts a secure connection between a client and a server. $TCP_n$ implements a TCP-handshake where $n$ clients can continuously connect with the server.
	\item
		\textbf{Scaling:} The $\mathit{Scaling}_n$ benchmarks specify a system of size $n$. To satisfy the specification $\mathit{Scaling}'_n$ a constant number of operations is sufficient.
\end{itemize}

\begin{table}[t!]
\begin{center}
	\caption{Benchmarking results of BoSy and our explainable synthesis tool}
	\label{table:Benchmarks}
	\scalebox{0.8}[0.8]{
	\begin{tabular}{ c | c | c | c | c | c | c | c | c }
		\hline
		\multicolumn{2}{c|}{Initial} & \multicolumn{2}{c|}{Extended} & Size & Operations & Number & Time in sec. & Time in sec.\\
		Ben. & Size & Ben. & Size & Aut. & chL/rdT & Just. & BoSy & Explainable\\
		\hline
		$\mathit{Arb}_2$ & 2 & $\mathit{Arb}_4$ & 4 & 5 & 0/3 & 3 & 0.348 & 1.356 \\
		$\mathit{Arb}_4$ & 4 & $\mathit{Arb}_5$ & 5 & 6 & 0/2 & 2 & 2.748 & 12.208\\
		$\mathit{Arb}_4$ & 4 & $\mathit{Arb}_6$ & 6 & 7 & 0/3 & 3 & 33.64 & 139.088\\
		$\mathit{Arb}_4$ & 4 & $\mathit{Arb}_8$ & - & 9 & - & - & timeout & timeout\\
		$\mathit{Arb}_2$ & 2 & $\mathit{ArbFull}_2$ & 4 & 6 & 3/7 & 10 & 0.108 & 0.352 \\
		$\mathit{ArbFull}_2$ & 4 & $\mathit{ArbFull}_3$ & 8 & 8 & 1/18 & 19 & 26.14 &  288.168\\
		\hline
		$\mathit{AMBAEnc}_2$ & 2 & $\mathit{AMBAEnc}_4$ & 4 & 6 & 1/11 & 12 & 0.26 & 7.16\\
		$\mathit{AMBAEnc}_4$ & 4 & $\mathit{AMBAEnc}_6$ & 6 & 10 & 1/21 & 22 & 5.76 & 973.17\\
		$\mathit{AMBAArb}_2$ & - & $\mathit{AMBAArb}_4$ & - & 17 & - & - & timeout & timeout\\
		$\mathit{AMBAArb}_4$ & - & $\mathit{AMBAArb}_6$ & - & 23 & - & - & timeout & timeout\\
		$\mathit{AMBALock}_2$ & - & $\mathit{AMBALock}_4$ & - & 5 & - & - & timeout & timeout\\
		$\mathit{AMBALock}_4$ & - & $\mathit{AMBALock}_6$ & - & 5 & - & - & timeout & timeout\\
		\hline
		$\mathit{LoadFull}_2$ & 3 & $\mathit{LoadFull}_3$ & 6 & 21 & 1/10 & 10 & 6.50 & 49.67\\
		$\mathit{Loadfull}_3$ & 6 & $\mathit{LoadFull}_4$ & - & 25 & -  & - & timeout & timeout\\
		\hline
		$\mathit{BitStuffer}_2$ & 5 & $\mathit{BitStuffer}_3$ & 7 & 7 & 2/7 & 9 & 0.08 & 1.02\\
		$\mathit{BitStuffer}_3$ & 7 & $\mathit{BitStuffer}_4$ & 9 & 9 & 1/6 & 7 & 0.21 & 3.97\\
		\hline
		$\mathit{ABPRec}_1$ & 2 & $\mathit{ABPRec}_2$ & 4 & 9 & 2/5 & 7 & 0.11 & 1.52\\
		$\mathit{ABPRec}_2$ & 4 & $\mathit{ABPRec}_3$ & 8 & 17 & 4/9 & 13 & 2.87 & 326.98\\
		$\mathit{ABPTran}_1$ & 2 & $\mathit{ABPTran}_2$ & 4 & 31 & 1/5 & 5 & 0.76 & 76.93\\
		$\mathit{ABPTran}_2$ & 4 & $\mathit{ABPTran}_3$ & - & 91 & - & - & timeout & timeout\\
		\hline
		$\mathit{TCP}_1$ & 2 & $\mathit{TCP}_2$ & 4 & 6 & 1/4 & 5 & 0.05 & 0.19\\
		$\mathit{TCP}_2$ & 4 & $\mathit{TCP}_3$ & 8 & 8 & 3/14 & 17 & 0.58 & 14.47\\
		\hline
		$\mathit{Scaling}_4$ & 4 & $\mathit{Scaling}'_4$ & 4 & 4 & 4/0 & 4 & 0.02 & 0.10\\
		$\mathit{Scaling}_5$ & 5 & $\mathit{Scaling}'_5$ & 5 & 5 & 4/0 & 4 & 0.03 & 0.22\\
		$\mathit{Scaling}_6$ & 6 & $\mathit{Scaling}'_6$ & 6 & 6 & 4/0 & 4 & 0.04 & 0.51\\
		$\mathit{Scaling}_8$ & 8 & $\mathit{Scaling}'_8$ & 8 & 8 & 4/0 & 4 & 0.12 & 2.54\\
		$\mathit{Scaling}_{12}$ & 12 & $\mathit{Scaling}'_{12}$ & 12 & 12 & 4/0 & 4 & 34.02 & 167.03\\
		\hline
	\end{tabular}
	}
\end{center}
	\vskip -.9cm
\end{table}

\subsection{Technical Details}
We instantiate BoSy with an explicit encoding, a linear search strategy, an input-symbolic backend and moore semantics to match our implementation. Both tools only synthesize winning strategies for system players. We use \textbf{ltl3ba}\cite{ltl3ba} as the converter from an LTL-specification to an automaton for both tools. As both constraint systems only contain existential quantifiers, \textbf{CryptoMiniSat}\cite{cryptoMiniSat} is used as the SAT-solver. The solution bound is the minimal bound that is given as input and the initial transition system is synthesized using BoSy, at first. The benchmarks results were obtained on a single quad-core Intel Xeon processor (E3-1271 v3, 3.6GHz) with 32 GB RAM, running a GNU/Linux system. A timeout of 2 hours is used.

\subsection{Observations}

The benchmark results are shown in \Cref{table:Benchmarks}. For each benchmark, the table contains two specifications, an initial and an extended one. For example, the initial specification for the first benchmark $\mathit{Arb}_2$ specifies a two-client arbiter and the extended one $\mathit{Arb}_4$ specifies a four-client arbiter. Additionally, the table records the minimal solution bound for each of the specifications. Our explainable synthesis protoype starts by synthesizing a system for the initial specification and then synthesizes a minimal repair and an explanation for the extended one. If the minimal repair has to access additional states, that are initially unreachable, our prototype initializes them with a self loop for all input assignments where no output holds. The traditional synthesis tool BoSy only synthesizes a solution for the extended specification. The size of the universal co-B\"uchi automaton, representing the extended specification is reported. For the explainable synthesis, the applied operations of the minimal repair and the number of justifications is given. For both tools, the runtime is reported in seconds.

The benchmark results reveal that our explainable synthesis approach is able to solve the same benchmarks like BoSy. In all cases, except two, we are able to synthesize explanations where every operation can be single justified. The applied operations show that there are only a small number of changed state labelings, primarily for reaching additional states. Since only minimal initial systems are synthesized, the outputs in the given structure are already fixed. Redirecting transitions repairs the system  more efficient. In general, the evaluation reveals that the runtime for the explainable synthesis process takes a multiple of BoSy with respect to the number of applied operations. Thus, the constraint-based synthesis for minimal repairs is not optimal if a small number of operations is sufficient since the repair synthesis problem is polynomial in the size of the system as proven in \Cref{theorem:decision_minimal_transformation_fixed_state_space}. To improve the runtime and to solve more instances, many optimizations, used in existing synthesis tools, can be implemented. These extensions include different encodings such as QBF or DQBF or synthesizing strategies for the environment or a mealy semantics.

\section{Conclusion}\label{sec:Conclusion}

In this paper, we have developed an explainable synthesis process for reactive systems. For a set of specification, expressed in LTL, the algorithm incrementally builds an implementation by repairing intermediate implementations to satisfy the currently considered specification. In each step, an explanation is derived to justify the taken changes to repair an implementation.   We have shown that the decision problem of finding a repair for a fixed number of transformations is polynomial in the size of the system and exponential in the number of operations. By extending the constraint system of bounded synthesis, we can synthesize minimal repairs where the resulting system is size-optimal. We have presented an algorithm that constructs explanations by using Bounded Model Checking to obtain counterexample traces. The evaluation of our prototype showed that explainable synthesis, while more expensive, can still solve the same benchmarks as a standard synthesis tool. In future work, we plan to develop this approach into a comprehensive tool that provides rich visual feedback to the user. Additionally, we plan to investigate further types of explanations,  including quantitive and symbolic explanations.

\bibliographystyle{plain}
\bibliography{mybib}

\end{document}